\newcommand{\E}{\mathbb{E}}
\tikzstyle{block} = [rectangle, draw, fill=white, 
\tikzset{
     arrow/.style = { thick,  ->, >=Triangle},
}
\newtheorem{prop}{Proposition}
\def\spacingset#1{\renewcommand{\baselinestretch}%
{#1}\small\normalsize} \spacingset{1}
\title{Skewed link regression models for imbalanced binary response with applications to life insurance}
\author{Shuang Yin \thanks{Department of Statistics, University of Connecticut, 215 Glenbrook Road, Storrs, CT,  06269-4120, USA. Email: \texttt{shuang.yin@uconn.edu}.} 
\and Dipak K. Dey \thanks{Department of Statistics, University of Connecticut, 215 Glenbrook Road, Storrs, CT,  06269-4120, USA. Email: \texttt{dipak.dey@uconn.edu}.}\and Emiliano A. Valdez\thanks{Corresponding author: Department of Mathematics, University of Connecticut, 341 Mansfield Road, Storrs, CT, 06269-1009, USA. Email: \texttt{emiliano.valdez@uconn.edu}.}
\and Guojun Gan\thanks{Department of Mathematics, University of Connecticut, 341 Mansfield Road, Storrs, CT, 06269-1009, USA. Email: \texttt{guojun.gan@uconn.edu}.}
\and Jeyaraj Vadiveloo\thanks{Department of Mathematics, University of Connecticut, 341 Mansfield Road, Storrs, CT, 06269-1009, USA. Email: \texttt{jeyaraj.vadiveloo@uconn.edu}.}}
\begin{document}

\clearpage\maketitle
\thispagestyle{empty}

\begin{abstract}

For a portfolio of life insurance policies observed for a stated period of time, e.g., one year, mortality is typically a rare event. When we examine the outcome of dying or not from such portfolios, we have an imbalanced binary response. The popular logistic and probit regression models can be inappropriate for imbalanced binary response as model estimates may be biased, and if not addressed properly, it can lead to serious adverse predictions. In this paper, we propose the use of skewed link regression models (Generalized Extreme Value, Weibull, and Fre\`chet link models) as more superior models to handle imbalanced binary response. We adopt a fully Bayesian approach for the generalized linear models (GLMs) under the proposed link functions to help better explain the high skewness. To calibrate our proposed Bayesian models, we use a real dataset of death claims experience drawn from a life insurance company's portfolio. Bayesian estimates of parameters were obtained using the Metropolis-Hastings algorithm and for Bayesian model selection and comparison, the Deviance Information Criterion (DIC) statistic has been used. For our mortality dataset, we find that these skewed link models are more superior than the widely used binary models with standard link functions. We evaluate the predictive power of the different underlying models by measuring and comparing aggregated death counts and death benefits.

\vspace{0.6cm}

\noindent \textbf{Keywords}: Bayesian analysis; Extreme value distribution; Imbalanced response; Mortality investigation; Weibull distribution. 

\end{abstract}

\newpage

\section{Introduction} \label{sec:intro}
\markboth{Introduction}{}

The prediction and interpretation of mortality rates have been an important aspect of risk management in the life insurance industry. For several decades, different models have been applied to actuarial work to investigate the possibilities of explaining mortality. The early work of \cite{gompertz1825} provided a parametric form that explains the exponential increase in mortality with age. More recent work includes \cite{haberman1996glm} and \cite{zhu2001cox}, which examined the applications of Cox proportional hazards model (\cite{cox1972regression}) in analyzing the mortality for life insurance policies. The Cox regression model is a tool for understanding the effects of the presence of risk factors such as age, gender, policy duration, and smoking habits, in mortality. \cite{olbricht2012tree} offered an interesting alternative approach using tree-based models to construct the life table. Accordingly, tree-based models have the advantages of ``ease of predictability'' and ``transparency.'' \cite{meyricke2013determinants} proposed the generalized linear mixed models (GLMMs) framework for mortality modeling to incorporate longitudinal variables allowing for both underwriting factors and frailty. \cite{booth2008review} discussed an array of work in the literature on mortality modeling especially those pertaining to forecasting.

This paper has a different objective for studying mortality. In particular, we are interested in tracking and monitoring the death claims experience of a portfolio of life insurance policies. As pointed out in \cite{yin2020cluster}, such tracking mechanism can help insurers take necessary actions to mitigate against the immediate economic impact of any deviations from expectations. An interesting related work of \cite{zhu2015logit} applies logistic regression models to understand mortality trends, slopes, and differentials. An additional challenge of mortality modeling, especially related to periodic claims monitoring, is that it is considered a rare event with a very tiny probability; there is a significant difference between the number of deaths and the number of survivors. For example, the mortality rate, according to our insured portfolio dataset, observed during a policy year is only about 1.3\%, which in a binary model setup, such a response is characterized as highly imbalanced (or skewed). In this paper, we focus on capturing this high skewness for which other methods may not be able to model the data appropriately.

It is well known that binary regression models fall within the class of generalized linear models (GLMs) using Bernoulli distribution. See \cite{mccullagh1989}. The commonly used functions linking the mean parameter to the predictor variables are the symmetric link functions, logit and probit, leading, respectively, to the logistic and probit regression models. Both link functions reasonably work for observations with balanced response binary outcomes. Suppose we are given a set of observations expressed as $(y_i, \boldsymbol{x}_i)$, where $y_i$ is a binary outcome and $\boldsymbol{x}_i$ is a set of predictor variables. The binary regression model has a latent variable representation where $y_i$, for observation $i$, is related to an unobserved variable $z_i$ as $y_i = I(z_i>0)$. $z_i$, also called the latent variable, is directly linked to the predictor variables as a linear model with an error component as $z_i = \boldsymbol{x}_i \boldsymbol{\beta} + u_i$, where $\boldsymbol{\beta}$ is a vector of coefficients for the predictors. The error component, $u_i$ has a pre-specified distribution directly related to the link function.

To deal with the limitations of symmetric link functions for binary outcomes, some studies extended the parametric class including the use of power transformation of the logit link (\cite{aranda1981two}) and the Box-Cox transformation for binary regression (\cite{guerrero1982boxcox}). Other model extensions include two-parameter classes of link functions such as two parameter generalization of logit link by \cite{pregibon1980goodness}, the two shape parameters link by \cite{stukel1988genlogistic}, and two tailed link by \cite{czado1994parametric}. However, the use of these extended classes of parametric link functions do not always fit the data well with extremely imbalanced response variable. For example, Stukel's generalized logistic model may lead to improper posterior distribution under many types of noninformative improper prior. \cite{chen1999new} suggested an asymmetric link function especially when the probability of a given binary response approaches 0 at a significantly different rate than it approaches 1. This model yields to proper posterior distribution under noninformative improper priors for the regression coefficients and can handle datasets with balanced and imbalanced response variable. However, the intercept and skewness parameters are confounded with each other in this model, although \cite{kim2007flext} overcame this problem to some extent with the constraint that the skewness lies in the range of $(0, 1]$, which restricts the possibility of negative skewness of the response variable. 

This open problem to some extent motivates us to investigate more flexible models to accommodate such imbalances. In order to appropriately model the large skewness, \cite{wang2010gev} proposed binary regression with the generalized extreme value (GEV) link. The GEV distribution belongs to a family of continuous probability distributions developed from extreme value theory. The GEV distribution combines the Gumbel, Fr\'echet, and reversed Weibull distributions into a single family. \cite{caron2018entropy} presented a Weibull link for the imbalanced binary regression, which is very flexible and capable to handle with different types of data. The Fre\'chet distribution, which is known as the Type II extreme value distribution, also has the shape parameter to control the direction and degree of skewness.

In this paper, we propose the use of this family of asymmetric link functions to binary regression models for analyzing life insurance data with highly imbalanced mortality. The model parameters are estimated using Bayesian approach. There are advantages to the use of these link functions that are suitable for our purposes:
\begin{itemize}
\item With a continuous range of possible values of the shape parameter, the resulting models can automatically estimate the skewness parameter without any pre-assumption on the direction of the skewness and can thus better quantify the skewness; 
\item The commonly used link functions, such as logit, probit, and complementary log-log, can be derived as special or limiting cases of this family of link functions;
\item In a Bayesian framework, even under the improper priors, the posterior distribution of the parameter is still proper; and
\item With the introduction of latent variables, the process can be computationally efficient whereby the Markov Chain Monte Carlo (MCMC) implementation can easily be used to sample the parameters.
\end{itemize}

This paper has been structured as follows.  Section \ref{sec:binary} provides a brief review of binary regression models and describes the latent variable interpretation of the binary regression models that is important for comparative purposes. We also describe and review the various skewed link models that are used in this paper. In Section \ref{sec:bayes}, we present ideas of Bayesian inference useful for the estimation and prediction based on our proposed skewed link regression models. For purposes of being self-contained, a detailed procedure of Metropolis-Hastings algorithm is described in this section. Section \ref{sec:simulation} discusses the simulation studies and shows the recovery work of the various proposed models. Section \ref{sec:empirical} is about the empirical data used in our model calibration and predictions. We conclude in Section \ref{sec:conclude}. 

\section{Binary regression models} \label{sec:binary}
\markboth{Binary regression models}{}

For a fixed period, our dataset can be described as $D = (\boldsymbol{y},\boldsymbol{x},n)$ where $\boldsymbol{y}=(y_1,y_2,\ldots,y_n)'$ is a vector of responses, $\boldsymbol{x}=(\boldsymbol{x}_1,\boldsymbol{x}_2,\ldots,\boldsymbol{x}_n)'$ is a matrix of predictor variables, and $n$ is the total number of observations. The response $y_i$ is a binary outcome, with $y_i=1$ indicating death has been observed for policyholder $i$ and $y_i=0$ indicating survival otherwise. The vector of predictor variables is $\boldsymbol{x}_i = (x_{i1},x_{i2},\ldots,x_{ip})'$, where $p$ is the number of observed predictor variables in the dataset.

Let $q_i = \text{Prob}(y_i=1|\boldsymbol{x}_i)$ denote the probability of death, given the predictor variables. Since $y_i$ is a binary outcome, it can easily be deduced that the conditional expectation of $y_i$ given $\boldsymbol{x}_i$ is also $\E(y_i=1|\boldsymbol{x}_i) = q_i = \text{Prob}(y_i=1|\boldsymbol{x}_i)$. It is well known that binary regression model falls within the class of generalized linear models (GLMs) with the link function $g(q_i) = \boldsymbol{x}'_i \boldsymbol{\beta}$ where $\boldsymbol{\beta}=(\beta_0,\beta_1,\beta_2,\ldots,\beta_p)'$ is a vector of coefficients. Therefore, $q_i = g^{-1}(\boldsymbol{x}'_i \boldsymbol{\beta})$ and we would conveniently write this as $q(\boldsymbol{x}'_i \boldsymbol{\beta})$. Note that it is customary to include an intercept coefficient $\beta_0$ in the linear relationship so that without loss of generality, we can assume that the vector of predictor variables is augmented as $(1,\boldsymbol{x}_{i1},\ldots,\boldsymbol{x}_{ip})'$. See \cite{mccullagh1989} and \cite{cox1989}.

\subsection{Latent variable interpretation} \label{sec:latent}

The binary regression model can be re-expressed and re-interpreted in terms of latent (or unobserved) variables, herewith denoted by $\boldsymbol{z}=(z_1,z_2,\ldots,z_n)'$.  Define these latent variables so that the binary outcome is
\begin{equation}
y_i = \begin{cases}
1, & \text{for } z_i > 0 \\
0, & \text{for } z_i \le 0 \\
\end{cases}
= I(z_i > 0), \label{eq:latent1}
\end{equation} 
where
\begin{equation}
z_i = \boldsymbol{x}'_i \boldsymbol{\beta} + u_i, \label{eq:latent2}
\end{equation}
and $u_i|\boldsymbol{x}'_i \sim F$. Here $F$ is a cumulative distribution function of $u_i$, given $\boldsymbol{x}'_i$, which is often assumed to be a continuous random variable. It follows therefore that
\begin{equation}
\begin{split}
q(\boldsymbol{x}'_i \boldsymbol{\beta})  &= \E(y_i=1|\boldsymbol{x}_i) = \text{Prob}(z_i >0)  \\
        &= \text{Prob}(u_i > - \boldsymbol{x}'_i \boldsymbol{\beta}) = 1-F(-\boldsymbol{x}'_i \boldsymbol{\beta}). \label{eq:latent3}
\end{split}
\end{equation}
Note that in the case where $F$ is the distribution function of a symmetric random variable $u_i$ with mean 0, then equation (\ref{eq:latent3}) becomes
\begin{equation*}
q(\boldsymbol{x}'_i \boldsymbol{\beta}) = F(\boldsymbol{x}'_i \boldsymbol{\beta}).
\end{equation*}
In this case, $F^{-1}$ determines the link function in the GLM framework. This latent variable representation is believed to be introduced by \cite{albert1993bayes}.

Examples of distributions that belong to the class of symmetric distributions include the standard logistic, standard normal, and the standard Gumbel distributions. This leads respectively to the three commonly used link functions in binary regression (without loss of generality, the subscript $i$ is dropped):
\begin{itemize}
\item[(1)] \textit{Logistic regression}: When $u$ has a logistic distribution with mean 0 and scale parameter 1, i.e., $u \sim \text{logistic}(0,1)$, we have $F(u) = 1/(1+e^{-u})$. The link function can be expressed as the logit function $g(q) = \log(q/(1-q))$.
\item[(2)] \textit{Probit regression}: When $u$ has a normal distribution with mean 0 and scale parameter 1, i.e., $u \sim \text{Normal}(0,1)$, we have $F(u) = \Phi(u)$. The link function can be expressed as the probit function $g(q) = \Phi^{-1}(q)$.
\item[(3)] \textit{cloglog regression}: When $u$ has a Gumbel distribution with location 0 and scale parameter 1, we have $F(u) = \exp(-\exp(-u))$. The link function can be expressed as the cloglog function $g(q) = \log(-\log(1-q))$.
\end{itemize}

There are several advantages to the latent variable representation of the binomial regression. First, it allows for a natural interpretation. To illustrate, in the mortality context where $y=1$ is death and $y=0$ is survival, then the latent variable may be viewed as a continuous measure of mortality. Second, the latent variable allows for a natural cutpoint, at zero, for splitting the prediction for $y$ into one of a binary outcome. Third, it has computational advantages particularly when Bayesian inference is used for estimation such as the use of an MCMC algorithm that may include Metropolis-Hastings and Gibbs sampling. See \cite{gelman2014}. While a direct likelihood estimation is sufficient for simple models such as logistic and probit regression, Bayesian computation is particularly helpful for other forms of link functions such as those asymmetric link functions introduced in the subsequent subsections.

\subsection{Skewed link functions} \label{sec:asym}

The use of symmetric link functions in binary regression may be considered too restrictive especially for observations with imbalanced response, one where there is considerably a large proportion of one outcome.  In the case of imbalanced data, the link function can be mis-specified, which can lead to inconsistent and biased model estimates and therefore increase in the values of mean squared errors. See \cite{czado1994parametric}. This paper focuses on using parametric link functions based on distributions that are not necessarily symmetric. In particular, we examine and adopt skewed link functions derived from three families of distributions that can accommodate large skewness to handle the imbalanced response. They are the generalized extreme value (GEV) distributions, the skewed Weibull distributions, and the Fre\'chet distributions.

For our purposes, we will consider three families of asymmetric or skewed link functions that are based on the generalized extreme value (GEV) distributions. Its cumulative distribution function has the general form
\begin{equation}
G(u;\mu,\sigma,\xi) = \exp \left\{-\left[1+ \xi \left(\frac{u-\mu}{\sigma} \right) \right]^{-1/\xi} \right\}, \label{eq:gev}
\end{equation} 
where $-\infty < u < \infty$ provided it satisfies $1+ \xi (u-\mu)/\sigma >0$, and the parameters belong to $-\infty < \mu < \infty$, $\sigma >0$, and  $-\infty < \xi < \infty$. This GEV distribution function can be derived from a limiting approximation to the distribution of the maxima of a sequence of random variables. This form has been attributed to the original work of \cite{mcfadden1978gev}. 

After some preliminary investigation of suitable link functions for our purposes, we have concluded to consider the following three families of skewed link functions:
\begin{itemize}
\item[(1)] \textit{Standard GEV link functions}: When the location is $\mu=0$ and the scale $\sigma=1$, we will call this the standard GEV distribution so that its distribution function in this case is
\begin{equation}
F_{\text{SGEV}}(u) = \exp [-(1+ \xi u)^{-1/\xi}]. \label{eq:SGEV}
\end{equation}
We will not preclude the case where $\xi=0$, in which case the distribution has the simplified form of $F(u) = \exp [- \exp(-u)]$. This is the case of the Gumbel link functions, based on the Gumbel or Type I extreme value distribution. The link function can be expressed as $g(q) = [ (-\log q)^{-\xi} -1]/ \xi$.
\item[(2)] \textit{Skewed Weibull link functions}: In this case, we have 
\begin{equation}
F_{\text{SW}}(u) = 1- \exp(-u^\gamma) \label{eq:SW}
\end{equation}
defined only for $u>0$; its value will be zero elsewhere. This is actually the standard form of the Weibull distribution when the location is $\mu=0$ and the scale $\sigma=1$, which is indirectly implied from the GEV form in (\ref{eq:gev}) above. The reverse Weibull, also called the Type III extreme value distribution can be derived directly from this GEV distribution; the derivation of the Weibull considers the minima of a sequence of random variables, in contrast to maxima as in the GEV. The link function can be expressed as $g(q) = [-\log(1-q)]^{1/\gamma}$.
\item[(3)] \textit{Fre\'chet link functions}: When the location is $\mu=0$, the scale $\sigma=1$, $\alpha = 1/\xi$, and some transformation to $1+\xi u$, we arrive at the Fre\'chet distribution function of the form
\begin{equation}
F_{\text{FR}}(u) = \exp(-u^{-\alpha}), \label{eq:FR}
\end{equation}
provided $u>0$; it is defined to be 0 elsewhere. This distribution is sometimes referred to as the Type II extreme value distribution. The link function can be expressed as $g(q) = (-\log q)^{-1/\alpha}$.
\end{itemize}

Briefly, let us examine some properties of each of these distributions. For the standard GEV, the skewness measure proposed by \cite{arnold1995skew} is given by $\gamma=1-2F_{\text{SGEV}}(M)$, where $M$ is the mode of the distribution. Based on this definition, the flexibility of the link function is reflected in the skewness expressed explicitly as $2\exp(-(1+\xi))-1$, which depends only on the shape parameter $\xi$. The mode is evaluated at $[(1+\xi)^{-\xi}-1]/\xi$. 

For the skewed Weibull distribution, the shape parameter $\gamma \in (0, +\infty)$ controls the skewness. Similar to GEV link, the skewness can also be computed based on the mode $(\frac{\gamma-1}{\gamma})^{\frac{1}{\gamma}}$, which relies only on the shape parameter $\gamma$. Based on the formula $1-2\exp(-(\frac{\gamma-1}{\gamma}))$, the skewness lies somewhere in the interval $(-\infty, 0.2642)$.

The complementary log-log (cloglog) model, which has been considered as an alternative to logistic and probit models, is frequently used when the probability of an event is relatively small or relatively large. The complementary log-log link is a special case of the Weibull link with corresponding cumulative distribution function $F(u) = 1-\exp(-\exp(u))$. Consider the general form of the Weibull distribution expressed as $F(u)= 1-\exp (-(u-\alpha)^{\gamma})$. If we set $\alpha=-1$ and divide $u$ by $\gamma$, then we have
\begin{equation}
\lim_{\gamma \rightarrow \infty} 1-\exp \left[- \left(1+\frac{u}{\gamma} \right)^{\gamma} \right] = 1-\exp[-\exp(u)].
\end{equation} 
Although considered a skewed link, the cloglog, with skewness defined as $1-2F(M)=1-2F(0)=-0.2642$, does not provide enough flexibility since the mode $M=0$ is a constant without incorporating a shape parameter.

Finally, for the Fre\'chet distribution, the skewness parameter can be estimated by the mode using the relationship $1-2F_{\text{FR}}(M)$, where the mode $M=[\alpha/(1+\alpha)]^{1/\alpha}$. Thus,the skewness of this distribution is $2\exp[-(1+\alpha)/\alpha]-1$, which relies only on the shape parameter $\alpha$ and lies in the interval $(-1, -0.2642)$. This explains why this can be considered a flexible link function. Another important feature of the shape parameter arises from the fact that this kind of parameter affects the shape of the distribution by purely controlling the tail behavior than simply shifting it or stretching it.

Figure \ref{fig1:densities} displays the probability densities of asymmetric distributions described in this subsection (standard GEV, Weibull, and Fre\'chet) under different shape parameters. This allows us to visualize the type of skewness that can be derived from each of these distributions.

\begin{figure}[htbp!]
    \centering
    \scalebox{0.5}{
    \includegraphics{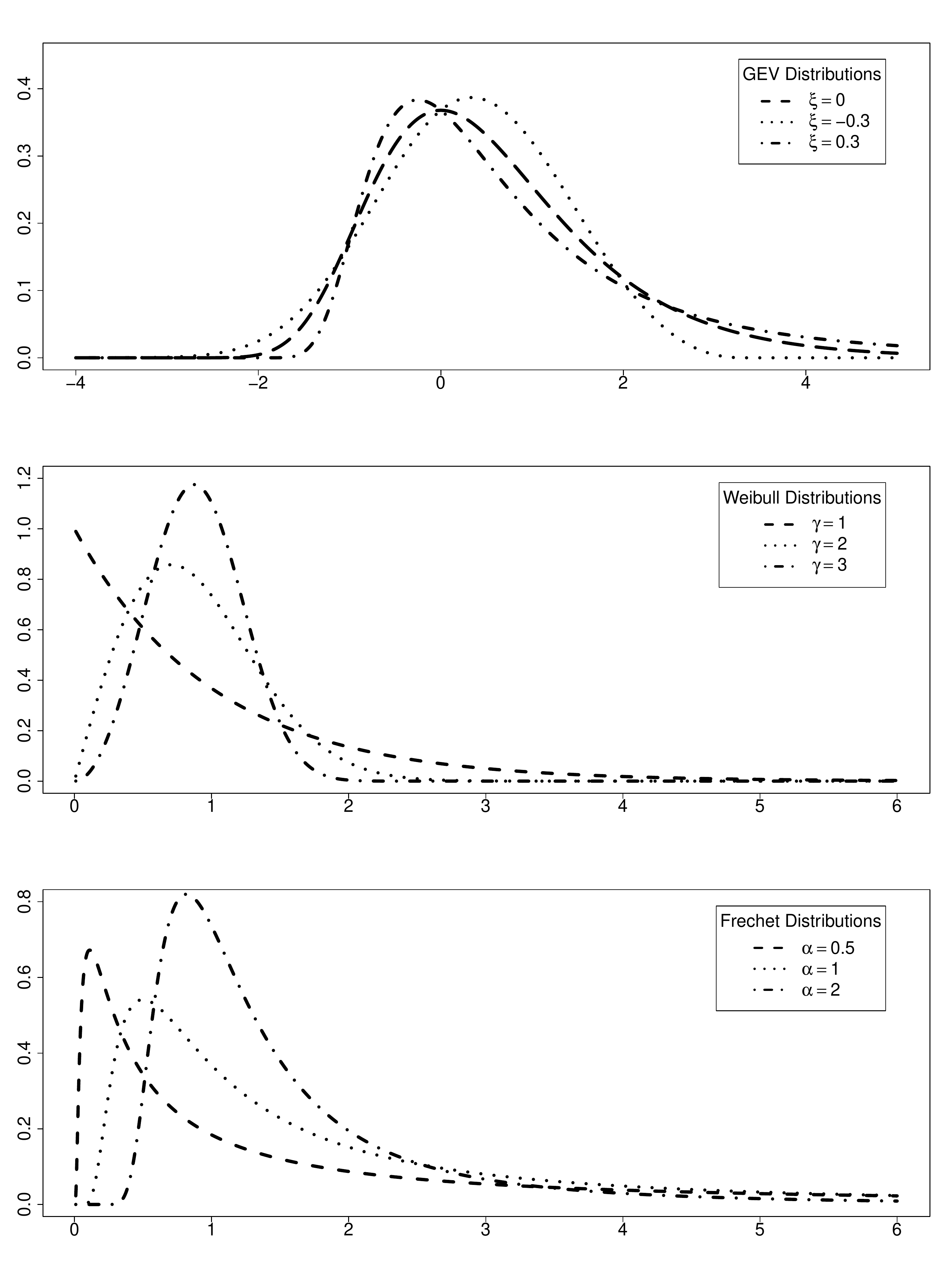}
    }
    \caption{Density plots for the standard GEV with $\xi=0,-0.3, 0.3$, skewed Weibull with $\gamma=1, 2, 3$, and Fre\'chet with $\alpha=0.5, 1, 2$}
    \label{fig1:densities}
\end{figure}

\section{Bayesian inference} \label{sec:bayes}
\markboth{Bayesian inference}{}

For the binary regression models with standard link functions that include the logit, probit, and cloglog, the parameters can be estimated based on the maximum likelihood using either the Newton's method or the iteratively reweighted least squares (IRLS). See \cite{mccullagh1989}. For our binary regression based on the skewed link models, we used the Bayesian approach with approximation of the posterior distribution based on Metropolis-Hastings algorithm, a type of a Markov chain Monte Carlo (MCMC) simulation method.

\subsection{Posterior distribution and inference}
In Bayesian statistics, the posterior distribution of the parameters is the conditional probability, given relevant evidence or information, which is usually the dataset. Let $\pi(\theta|y)$ denote the posterior distribution of the parameter $\theta$, given the dataset $y$, and $L(\theta|y) = p(y| \theta)$ denote the likelihood function of $\theta$, where $p(y|\theta)$ is the probability density function of $y$, given $\theta$. Using the Bayes Theorem, if we assign a prior distribution $\pi(\theta)$ to $\theta$, we have 
\begin{equation}
\pi(\theta|y) = \frac{p(y, \theta)}{\displaystyle \int_{\Theta} p(y|\theta)\pi(\theta) \text{d}\theta}= \frac{p(y|\theta)\pi(\theta)}{\displaystyle \int_{\Theta} p(y|\theta)\pi(\theta) \text{d}\theta} \propto  p(y|\theta)\pi(\theta) = L(\theta|y) \times \pi(\theta), \label{posterior}
\end{equation}
where $\displaystyle \int_{\Theta} p(y|\theta)\pi(\theta) \text{d}\theta$ is the normalized constant that is free of $\theta$. Suppose that we sample some number $N$ of independent, random values of $\theta$ from the posterior distribution $\pi(\theta|y)$, $\theta^{(1)}, \theta^{(2)}, \ldots, \theta^{(N)} \overset{i.i.d}{\sim} \pi(\theta|y)$. Then according to the  Law of Large Numbers,
\begin{equation}\label{posterior_expect}
    \begin{aligned}
    \E(\theta|y)& \approx \frac{1}{N} \sum_{i=1}^N \theta^{(i)} \\
    \E(g(\theta)|y) & \approx \frac{1}{N} \sum_{i=1}^N g(\theta^{(i)}),
    \end{aligned}
\end{equation}
where $g(\theta)$ is any function of $\theta$. As $N$ gets larger, the sample mean converges to its expectation. In this paper, the samples $\theta^{(i)}$'s are generated from the posterior distribution of equation (\ref{posterior}) using MCMC (Metropolis-Hastings) algorithm which is discussed in details in section (\ref{sec:mh}).

\subsection{Choice of priors for the skewed link models}

As earlier introduced, we denote our observed dataset as $D = (\boldsymbol{y},\boldsymbol{x},n)$, and for purposes of the subsequent discussions, we shall denote our set of parameters as $(\boldsymbol{\beta},\zeta)$ where without loss of generality, $\boldsymbol{\beta}$ corresponds to the regression coefficients for the predictor variables and $\zeta$ denotes any additional parameters arising from the skewed link models. Given the data $D$, the likelihood of the parameters can be written as:
\begin{equation}
L(\boldsymbol{\beta}, \zeta| D)  \propto \prod_{i=1}^n q_{i}^{y_i} (1-q_i)^{1-y_i} = \prod_{i=1}^n (1- F(-\boldsymbol{x_i'\beta}))^{y_i} \times
    F(-\boldsymbol{x_i'\beta})^{1-y_i}. \label{eq:likelihood}
\end{equation}
According to equation (\ref{posterior}), if the joint prior of $(\boldsymbol{\beta}, \zeta)$, $\pi(\boldsymbol{\beta}, \zeta)$, is given, then the joint posterior distribution for $(\boldsymbol{\beta}, \zeta)$ can be written as follows:
\begin{equation} \label{eq:joint}
\begin{aligned}
\pi(\boldsymbol{\beta}, \zeta|D) & \ \propto  L(\boldsymbol{\beta}, \zeta| D) \times \pi(\boldsymbol{\beta}, \zeta) \\
    & = \prod_{i=1}^n (1- F(-\boldsymbol{x_i'\beta}))^{y_i} \times F(-\boldsymbol{x_i'\beta})^{1-y_i} \times \pi(\boldsymbol{\beta}, \zeta).
\end{aligned}
\end{equation}

Consider the case of the \textit{standard GEV} link model where the additional parameter is $\zeta = \xi$. If the independent uniform priors $\pi(\boldsymbol{\beta})\propto 1$, $\pi(\xi) \propto 1$ are assigned to $\boldsymbol{\beta}$ and $\xi$, then $\pi(\boldsymbol{\beta}, \xi) \propto \pi(\boldsymbol{\beta}) \pi(\xi) \propto 1$. The resulting posterior distribution is still proper (\cite{wang2010gev}). Assuming independent priors for $\boldsymbol{\beta}$ and $\xi$, equation (\ref{eq:joint}) becomes
\begin{equation} \label{eq:jointSGEV}
\begin{aligned}
\pi(\boldsymbol{\beta}, \xi|D) \propto  \ \prod_{i=1}^n &  \left (1-\exp\left[-(1-\xi \boldsymbol{x_i'\beta})^{-1/\xi} \right] \right)^{y_i} \\
   & \times \left (\exp\left[-(1-\xi \boldsymbol{x_i'\beta})^{-1/\xi} \right] \right)^{1-y_i} \times \pi(\boldsymbol{\beta}) \times \pi(\xi).
\end{aligned}
\end{equation}
In this paper, we assume the independent multivariate normal and univariate normal priors for $\boldsymbol{\beta}$ and $\xi$, respectively, with $\boldsymbol{\beta}\sim \text{N}_{p+1} (\boldsymbol{0}, \sigma^2_{\beta} \boldsymbol{I}_{p+1})$ and $\xi\sim \text{N}(0, \sigma^2_{\xi})$.

Consider the case of the \textit{skewed Weibull} link model where the additional parameter is $\zeta = \gamma$. If the informative priors for $\boldsymbol{\beta}$ and $\gamma$ cannot be obtained, we can assign uniform prior $\pi(\boldsymbol{\beta}) \propto 1$ and noninformative prior $\pi(\gamma) \propto 1/\gamma^c$, for $\gamma >1$ and $c>1$, a fixed known constant. The resulting propriety of the posterior under improper priors, in this case,  is proved in \cite{caron2018entropy}. Assuming independent priors for $\boldsymbol{\beta}$ and $\gamma$, equation (\ref{eq:joint}) becomes
\begin{equation} \label{eq:jointSW}
\pi(\boldsymbol{\beta}, \xi|D) \propto  \ \prod_{i=1}^n \left(\exp\left [-(-\boldsymbol{x_i'\beta})^{\gamma} \right] \right)^{y_i} \times \left(1-\exp\left[-(-\boldsymbol{x_i'\beta})^{\gamma} \right]\right)^{1-y_i} \times \pi(\boldsymbol{\beta}) \times \pi(\gamma).
\end{equation}
In this paper, we assume $\boldsymbol{\beta}, \gamma$ are priori independent and we assign a multivariate normal N$_{p+1}(\boldsymbol{0}, \sigma^2_{\beta}\boldsymbol{I}_{p+1})$ and Gamma$(\text{shape} =3, \text{rate} = 4)$ to $\boldsymbol{\beta}$ and $\gamma$, respectively.

Finally, consider the case of the \textit{Fre\'chet} link model where the additional parameter is $\zeta = \alpha$. When the information of the prior is unavailable, we can consider the noninformative prior for $\boldsymbol{\beta}$ and $\alpha$. Thus, $\pi(\boldsymbol{\beta}, \alpha) \propto \pi(\beta)\pi(\alpha) \propto \frac{1}{\alpha^c}$, $\alpha > 0$ and $c>1$ is a constant. Given this noninformative prior, the posterior distribution explicitly written below as equation (\ref{eq:jointFR}) is proper. The resulting propriety of the underlying posterior distribution is proved in the Appendix. Such results have not appeared in the literature and therefore we find it useful to provide enough details of the proof. Assuming independent priors for $\boldsymbol{\beta}$ and $\gamma$, equation (\ref{eq:joint}) becomes
\begin{equation} \label{eq:jointFR}
\pi(\boldsymbol{\beta}, \xi|D) \propto  \ \prod_{i=1}^n \left(1-\exp\left[-(-\boldsymbol{x_i'\beta})^{-\alpha} \right]\right)^{y_i} \times \left(\exp\left[-(-\boldsymbol{x_i'\beta})^{-\alpha} \right]\right)^{1-y_i} \times \pi(\boldsymbol{\beta}) \times \pi(\alpha).
\end{equation}
In this paper, the multivariate normal and Gamma distributions are considered to be the priors of $\boldsymbol{\beta}$ and $\alpha$, respectively, with a similar parameterization of the Gamma distribution as in the case of the skewed Weibull link model.

\subsection{Metropolis-Hastings algorithm} \label{sec:mh}

The Metropolis-Hastings algorithm is used to sample the regression coefficients $\boldsymbol{\beta}$ and the shape parameters $\xi$, $\gamma$, and $\alpha$ for the standard GEV, skewed Weibull, and Fre\'chet link models, respectively. The Metropolis-Hastings algorithm (\cite{hastings1970}) can generate a random walk based on a proposed distribution and through a rejection method to determine the move. To fix ideas, let $p(\cdot|D)$ and $Q(\cdot)$ denote posterior and proposed distributions with given complete data $D$ and let $\theta$ denote the parameters. The procedure of Metropolis-Hastings algorithm can be described in steps as follows:
\begin{itemize}
    \item Step 0: Choose an arbitrary starting point $\theta_0$ and set $i=0$;
    \item Step 1: Generate a candidate point $\theta_{can}$ from the proposed distribution $Q(\theta_i, \cdot)$ and $u$ from Uniform(0, 1);
    \item Step 2: if $u\leq a(\theta_i, \theta_{can})$ then set $\theta_{i+1}=\theta_{can}$, otherwise, set $\theta_{i+1}=\theta_{i}$, the acceptance probability is given by: 
    \begin{align*}
        a(\theta_i, \theta_{can}) = \min \left \{1, \frac{p(\theta_{can}|D) / Q(\theta_i| \theta_{can})}{p(\theta_{i}|D) / Q(\theta_{can}|\theta_i)} \right \};
    \end{align*}
    \item Step 4: Update $i = i+1$ and repeat steps 1 and 2 until convergence. 
\end{itemize}

The distribution $Q$ is the kernel distribution. If $Q$ is symmetric, then the conditional probability of $Q(\theta_i| \theta_{can})$ and $Q(\theta_{can}|\theta_i)$ can be cancelled out in the calculation of $a(\theta_i, \theta_{can})$. In this paper, we perform 20,000 iterations for the parameter estimation after a burn-in of 1,000 iterations. Let $\theta^{(i)}$ for $i=1, 2, \ldots, N$ denote the samples which are draw from Metropolis-Hastings, and by the Law of Large Number, the posterior mean of $p(\theta)$ is
\begin{equation}
    \bar{p}(\theta) = \frac{1}{N} \sum_{i=1}^N p(\theta^{(i)}) \rightarrow \E[p(\theta)|D]
\end{equation}
with variance estimate
\begin{equation}
    \hat{\sigma}^2 = \frac{1}{N^2} \sum_{i=1}^N [p(\theta^{(i)})-\bar{p}(\theta))]^2 \rightarrow \text{Var}[\bar{p}(\theta)]. 
\end{equation}
We choose normal kernels N$(\text{mean} = \theta^{(i-1)}, \text{variance})$ for the Metropolis-Hastings sampling, which is a normal distribution with mean of the previous state and variance with an appropriate value. The subsampling is also used in this paper, in which the samples are taken from every 50 steps in order to reduce the strong autocorrelation between states in order to further guarantee the stability of the estimates. 

\subsection{Model assessment and evaluation}

For the model fitting in a Bayesian routine, we use the Deviance Information Criterion (DIC) which was proposed by \cite{spiegelhalter2002bayesian} to perform the model comparison and assessment.  

The Deviance is defined as $-2 \times \log(f(\boldsymbol{y}|\theta)$, then
\begin{equation}
    \text{DIC} =2 \times \overline{\text{Deviance}}(\boldsymbol{y}, \theta) - \text{Deviance}(\boldsymbol{y}, \overline{\theta}),
\end{equation}
where $\text{Deviance}(\boldsymbol{y}, \overline{\theta})$ is the Deviance evaluated under the value of the posterior mean $\overline{\theta}$ of the corresponding parameters, $\overline{\text{Deviance}}(\boldsymbol{y}, \theta) = \frac{1}{N} \sum_{i=1}^N \text{Deviance}(\boldsymbol{y}, \theta^{(i)})$ which is the average estimated discrepancy for $N$ samples, and $\theta^{(i)}$ is the $i$th sample generated from the posterior distribution $\pi(\theta|D)$. We also use $p_D=\overline{\text{Deviance}}(\boldsymbol{y}, \theta) - \text{Deviance}(\boldsymbol{y}, \overline{\theta})$ to measure the effective model size and penalize complexity. The DIC is a Bayesian alternative to AIC and BIC. The model with smaller DIC will provide better fit to the data. 

For the comparison of the frequentist models, we use the Bayesian Information Criterion (BIC), as developed in \cite{schwarz1978bic}. Recall that $\text{BIC} = \text{Deviance}(\hat{\theta})+p\times \log(n)$, where $\hat{\theta}$ is the maximum likelihood estimates and $p$ is the model dimension. We use Kolmogorov-Smirnov statistics (KS) as a measure of goodness of fit, which is defined as $\max_i|y_i-\hat{y}_i|$, the maximum absolute error of the predicted and observed values. Also, another statistics MAE, Mean Absolute Error, defined as $\frac{1}{n} \sum_{i=1}^n |y_i-\hat{y}_i|$, is used to measure the absolute deviations.

\subsection{Posterior predictive distribution} \label{predictive}

The posterior predictive distribution is the distribution of possible unobserved values conditional on the observed values. For a given dataset $\boldsymbol{y}=(y_1, y_2, \ldots, y_n)$, the posterior predictive distribution of a new unobserved value $y_{n+1}$ is defined to be the following
\begin{equation}
f(y_{n+1}|\boldsymbol{y})  = \int f(y_{n+1}, \boldsymbol{\theta}|\boldsymbol{y}) d \boldsymbol{\theta} = \int f(y_{n+1}| \boldsymbol{\theta}, \boldsymbol{y}) \pi(\boldsymbol{\theta}| \boldsymbol{y}) d \boldsymbol{\theta}. \label{eq:postpredic}
\end{equation}

If we assume the observed and unobserved data are conditionally independent given $\boldsymbol{\theta}$, then equation (\ref{eq:postpredic}) can be written as
\begin{equation}
f(y_{n+1}|\boldsymbol{y})  = \int f(y_{n+1}|\boldsymbol{\theta}) \pi(\boldsymbol{\theta}| \boldsymbol{y}) d \boldsymbol{\theta}.
\end{equation}
MCMC samples from the posterior predictive distribution of $\boldsymbol{Y}$ can be obtained following the procedure described below (\cite{hoff2009}). For each $s \in \{1, 2, \ldots, S\}$, 
\begin{itemize}
\item sample $\theta^{(s)} \sim \pi(\boldsymbol{\theta}| \boldsymbol{Y} = \boldsymbol{y}_{observed})$; and 
\item sample $\tilde{\boldsymbol{Y}}^{(s)} = (\tilde{y}_1^{(s)}, \ldots, \tilde{y}_n^{(s)}) \sim \text{i.i.d} \ \ f(y|\theta^{(s)})$ .
 \end{itemize}
It can be inferred that the sequence $\{ (\theta, \tilde{\boldsymbol{Y}})^{(1)}, \ldots, (\theta, \tilde{\boldsymbol{Y}})^{(S)} \}$ constitutes $S$ independent samples from the joint posterior distribution of $(\theta, \tilde{\boldsymbol{Y}})$ and the sequence $(\tilde{\boldsymbol{Y}}^{(1)}, \ldots, \tilde{\boldsymbol{Y}}^{(S)})$ also constitutes $S$ independent samples generated from the posterior predictive distribution of $\tilde{\boldsymbol{Y}}$. 

\section{Simulation studies and parameter recovery} \label{sec:simulation}
\markboth{Simulation studies and parameter recovery}{}

In the simulation studies, we conducted two experimental trials and to investigate and assess the performance of the various link models for each of these experiments. For each of the simulated dataset, we deploy six models under different link functions, logit, probit, and cloglog, which are considered frequentist models, together with skewed link models, standard GEV, skewed Weibull, and Fr\'echet, which are considered Bayesian models. To summarize the posterior marginal densities of the parameters under a Bayesian framework, we use the highest posterior density interval (\cite{box1973}), which has the shortest length for a given probability content. The calculations in this paper use the \cite{chen1999monte} algorithm to estimate an empirical Highest Posterior Density (HPD) interval of the parameters.

For experiment I, we generate 1000 independent response variables from the cloglog regression. The predictor $x_1$ is a categorical variable with two levels (with respective probabilities $0.3$ and $0.7$) and $x_2$ is a numerical attribute generated from N$(0,1)$. The true values of the coefficients $(\beta_0, \beta_1, \beta_2)$ are set to be $(-3, 0.2, 0.7)$ and the linear component of the regression model is $\boldsymbol{x'\beta} = \beta_0 + x_1\beta_1 +x_2 \beta_2$. The dataset contains responses of 15 1's and 985 0's. The estimation results based on the various link models are summarized in Table \ref{tab:sim_probit}, which shows that the skewed Weibull link model performs better than other Bayesian models with the lowest DIC and the largest log likelihood. The KS statistics for all the links are approximately close to each other, however, the Fre\'chet model outperforms all other models with respect to MAE. When it comes to parameter recovery, the asymmetric link models do not necessarily outperform the true probit model but are competitive enough against the other symmetric links. For the logit and cloglog models, the coefficient estimates of the significantly deviate from the true values; for all the skewed link models, most of the estimates fall within the 95\% HPD interval.

For experiment II, we generate the response variables with sample size $N=1000$ from the GEV regression model with $\xi = -0.3$ and $(\beta_0, \beta_1, \beta_2) = (-3, -0.2, 0.8)$, and that produced responses of 12 1's and 988 0's. The predictors are set to be the same with the previous experiment. The output, summarized in Table \ref{tab:sim_gev}, presents the model estimates under the six different link models. The data modeling of the standard GEV model performs well not only in view of recovery of parameters, but also in the model assessment. However, the skewed Weibull link is the most appropriate one since in both of the simulations, its estimated parameters are the ones that closely reach the true parameter values. The Fre\'chet link is the second best model in the sense of recovery of parameters. Therefore, again, the skewed links are very flexible to fit a dataset that is considered imbalanced because they automatically estimate the degree of skewness by controlling the shape parameter to improve their performance.

In each simulation experiment, multiple chains are generated to produce Bayesian estimates. Based on both experiments, we deduce that the skewed link regression models outperform the symmetric link regressions when the binary outcome is highly imbalanced.

\begin{table}[!htbp]
  \centering
  \caption{Performance comparison of the various link models using simulation experiment I. The true model is based on the probit link with $(\beta_0, \beta_1, \beta_2) = (-3,  0.2, 0.7)$.}
  \scalebox{0.68}{
    \begin{tabular}{l|c|c|c|c|c|c}
    \multicolumn{1}{c}{}      & \multicolumn{1}{c}{logit} & \multicolumn{1}{c}{probit} & \multicolumn{1}{c}{cloglog} & \multicolumn{1}{c}{standard GEV} & \multicolumn{1}{c}{skewed Weibull} & \multicolumn{1}{c}{Fre\'chet} \\
    \hline
    Parameter & Estimate & Estimate & Estimate & Estimate & Estimate & Estimate \\
                   & 95\% CI (HPD) & 95\% CI (HPD) & 95\% CI (HPD) & 95\% CI (HPD) & 95\% CI (HPD) & 95\% CI (HPD) \\
    \hline
    $\beta_0$ & -5.8806 & -2.8146 & -5.8596 & -5.8609 & -4.6617 & -4.9927 \\
       & (-7.1661, -4.5951) & (-3.2984,-2.3309) & 
       (-7.1206,-4.5987) & 
       (-8.2872 ,-3.6667) & 
       (-7.2326,-2.0174) & 
       (-6.9264 -3.0682) \\
    $\beta_1$ & 1.169 & 0.4436 & 1.1694 & 1.0438 & 0.7564& 0.9623 \\
       & (-0.0638, 2.4018) & (-0.0687,0.956) & (-0.0347, 2.3736) & (-0.0746,2.4326) & (-0.2486, 1.8675) & (-0.3557, 2.0823) \\
    $\beta_2$ & 1.3656 & 0.5651 & 1.3237 & 1.2716 & 0.9187 & 0.8908 \\
       & (0.6954, 2.0359) & (0.2732, 0.8571) & (1.1147, 2.4244) & (0.6886, 1.9587) & (0.2459, 1.6592) & (0.383, 1.4744) \\
       &   &   &   &   &   &   \\
    shape parameter &       &       &      & 0.0032 & 1.2437 & 3.2626 \\
                             &       &       &       & (-0.1884, 0.1495) & (0.7824, 1.8776) & (2.3721, 4.406) \\
       &   &   &   &   &   &   \\
    \hline
       &   &   &   &   &   &   \\
    DIC   &       &       &      &  106.2756 & 103.5679 & 108.6163 \\
    -logLik &  49.8878 & 49.8387 & 49.9225 & 50.0309 & 51.3653 & 51.5532 \\
    BIC   &   120.4988 & 120.4007 & 120.5683 &   &   &  \\
    KS    & 0.9972 & 0.9976 & 0.9972 & 0.9961 & 0.9967 & 0.8433 \\
    MAE   & 0.0209 & 0.02095 & 0.02088 & 0.02186 & 0.02123 & 0.1808 \\
       &   &   &   &   &   &   \\
    \hline
      \end{tabular}}
  \label{tab:sim_probit}
\end{table}

\begin{table}[!htbp]
  \centering
  \caption{Performance comparison of the various link models using simulation experiment II. The true model is based on the standard GEV link with $(\beta_0, \beta_1, \beta_2) = (-3, -0.2, 0.8)$.}
  \scalebox{0.67}{
    \begin{tabular}{l|c|c|c|c|c|c}
    \multicolumn{1}{c}{}      & \multicolumn{1}{c}{logit} & \multicolumn{1}{c}{probit} & \multicolumn{1}{c}{cloglog} & \multicolumn{1}{c}{standard GEV} & \multicolumn{1}{c}{skewed Weibull} & \multicolumn{1}{c}{Fre\'chet} \\
    \hline
    Parameter & Estimate & Estimate & Estimate & Estimate & Estimate & Estimate \\
                   & 95\% CI (HPD) & 95\% CI (HPD) & 95\% CI (HPD) & 95\% CI (HPD) & 95\% CI (HPD) & 95\% CI (HPD) \\
    \hline
    $\beta_0$ &-6.9426 & -3.5930 & -7.2263 & -2.5713 & -2.9694 & -3.2936 \\
       & (-8.6051, -5.2801) & (-4.4801, -2.7059) & (-9.1241, -5.3285) & (-3.5931, -1.3858) & (-3.9025, -2.0236) & (-4.2359, -2.2512) \\
    $\beta_1$ & -0.6864 & -0.2800 & -0.5363 & -0.2301 & -0.2835 & -0.5350 \\
        & (-2.2951, 0.9223) & (-1.0877, 0.5277) & (-2.2800, 1.2074) & (-0.6807, 0.2135) & (-0.8473, 0.1638) & (-1.3672, 0.0370) \\
    $\beta_2$ & 2.4534 & 1.2739 & 2.6662 &  0.6581 & 0.5784 & 0.8103 \\
        & (1.6712, 3.2357)  & (0.7798, 1.7679) & (1.7058, 3.6265) & (0.1729, 1.1070) & (0.3802, 1.1798) & (0.4761, 1.2215) \\
       &   &   &   &   &   &   \\
    shape parameter &       &       &      &  -0.4523 & 1.8945 & 4.5823 \\
                             &       &       &       & (-0.7527, -0.2207) & (1.2508, 2.6177) & (3.3014, 6.1055) \\
       &   &   &   &   &   &   \\
    \hline
       &   &   &   &   &   &   \\
    DIC   &       &       &      &  81.8384 & 82.1210 & 86.8719 \\
    -logLik & 56.3239  & 37.0525 & 37.3855 & 37.2829 & 39.6527 & 58.1976 \\
    BIC   &   133.3711  & 94.8283 & 95.4942 &    &   &  \\
    KS    & 0.9991  & 0.9925 & 0.9918 & 0.9870 & 0.9869 & 0.7983 \\
    MAE   & 0.0264 & 0.0191 & 0.0187 & 0.0211 & 0.0204 & 0.1481 \\
       &   &   &   &   &   &   \\
    \hline
      \end{tabular}}
  \label{tab:sim_gev}
\end{table}

\section{Empirical data on life insurance} \label{sec:empirical}
\markboth{Empirical data on life insurance}{}

To illustrate the applicability of the binary regression models previously discussed, we drew observations of policyholders during calendar year 2015 from an insurance portfolio of a major insurer. In order to preserve confidentiality, we took subsamples of 127,777 observations for training our models and 18,254 observations for testing the various estimated models. During the calendar year, we are able to observe whether the policyholder died or survived during the year. The observed mortality is highly imbalanced with 1,720 actual deaths observed in our training set and 247 actual deaths observed in our test set. Each observation is described by 9 attributes of which 6 are considered categorical and 3 are numerical variables.

\subsection{Data description}

Table \ref{tab:data summary_part1} shows the description and summary statistics of each categorical variable used in our mortality investigation. We regroup the variable \textit{State} into 8 different geographic regions and create a new categorical variable \textit{Regions} in order to simplify the number of levels to describe geographic location. For many life insurance contracts, secondary guarantees are offered to ensure some death benefits are paid even if the cash surrender value of the policy falls to zero. From our portfolio, we observe six levels to describe the kind of secondary guarantees that is purchased with the contract. The insured's sex indicator \textit{Gender} is also a discrete variable with 2 levels, Male and Female, with 52.75\% Male and the rest are Female. \textit{Smoker Status} indicates the insured's smoking status at policy issue with $6.52 \%$ Smokers, $75.19 \%$ Non-Smokers, and $18.29 \%$ Unismokers. It is not uncommon practice to have simplified underwriting classifying a policyholder as Unismoker, in the absence of information about smoking habit and for which a blended smoker and non-smoker mortality rates are used. The Line of business (LOB) variable broadly classifies the type of life insurance business the contract is issued. Finally, the categorical variable \textit{Plan} has 7 different levels: Universal Life with Long Term Care (UL\_LTC), Term Insurance Plan (Term), Universal Life with Secondary Guarantees (UL\_SG) and without Secondary Guarantees (UL\_NSG),  Variable Life with Secondary Guarantees (VUL), Whole Life Policy and Corporate-Owned Life insurance (COLI). This categorization can be quite unique to the type of contracts issued by an insurance company.

Summary statistics of the numerical variables are in Table \ref{tab:data summary_part2}. The issue ages range from as young as newborn to as old as 90, with an average age at issue of 37 years. The face amount is the amount of death benefit and although this amount may vary according to the type of contract, most of our policies in our dataset have level face amounts. The duration, measured in years, refers to the number of years since policy issue, as of the period of observation. All three continuous variables can have direct effect on mortality, for example, it is well known that mortality rate increases with age.

In this paper,  we normalize these three continuous variables: Issue Age, Face Amount and Duration. For Issue Age and Duration, we rescale the variables so that the range is in $[-1,1]$, for which the general formula is given by: 
$$
x_{new} = \frac{x-\text{mean}(x)}{\text{standard deviation}(x)}, 
$$
where $x$ is the original value and $x_{new}$ is the normalized value. However, for \text{Face Amount}, the variable is highly skewed understandably because of some policies with unusually high values; its magnitude has to be narrowed down. In this case, we take logarithm of the total data followed by rescaling the data:
$$
x_{new} = \frac{\log(x)-\text{min}(\log(x))}{\text{max}(\log(x))-\text{min}(\log(x))}.
$$

\begin{table}[htbp]
  \centering
  \caption{Mortality data description and summary of categorical variables}
  \scalebox{0.9}{
    \begin{tabular}{rrlc}
    \multicolumn{1}{l}{Categorical variables } & \multicolumn{1}{l}{Description} &       & Proportions \\
    \midrule
    \multicolumn{1}{l}{Regions} & \multicolumn{1}{l}{Region of issue } & Northest\_NewEngland & 5.57\% \\
          &       & Northest\_MidAtlantic & 14.36 \\
          &       & Midwest & 23.13\% \\
          &       & South\_Atlantic & 24.45\% \\
          &       & South\_SouthCentral & 14.19\% \\
          &       & West\_Mountain & 4.92\% \\
          &       & West\_Pacific & 16.17\% \\
          &       & Foreign & 0.70\% \\
    \midrule
    \multicolumn{1}{l}{SG\_IND} & \multicolumn{1}{l}{Secondary guarantee indicator} & Brokerage & 0.45\% \\
          &       & Legacy  & 1.03\% \\
          &       & UL\_LTC & 6.97\% \\
          &       & NSG   & 31.28\% \\
          &       & No Information & 52.11\% \\
          &       & SG    & 3.16\% \\
    \midrule
    \multicolumn{1}{l}{Gender} & \multicolumn{1}{l}{Insured's sex} & Female & 42.75\% \\
          &       & Male  & 57.25\% \\
    \midrule
    \multicolumn{1}{l}{Smoker Status} & \multicolumn{1}{l}{Insured's smoking status} & Smoker & 6.52\% \\
          &       & Nonsmoker & 75.19\% \\
          &       & Unismoker & 18.29\% \\
    \midrule
    \multicolumn{1}{l}{LOB} & \multicolumn{1}{l}{Line of business} & ISL   & 20.75\% \\
          &       & TRAD  & 20.09\% \\
          &       & EM  & 1.04\% \\
          &       & No Information & 52.11\% \\
    \midrule
    \multicolumn{1}{l}{Plan} & \multicolumn{1}{l}{Plan type} & UL\_LTC & 8.44\% \\
          &       & Term  & 29.07\% \\
          &       & UL\_NSG & 31.83\% \\
          &       & UL\_SG & 1.05\% \\
          &       & VUL   & 5.89\% \\
          &       & Whole Life & 22.68\% \\
          &       & COLI  & 1.04\% \\
    \bottomrule
    \end{tabular}
    }
  \label{tab:data summary_part1}%
\end{table}%

\begin{table}[htbp]
  \centering
  \caption{Mortality data description and summary of numerical variables}
   \scalebox{0.92}{
    \begin{tabular}{lrrrrrr}
    Numerical variables  & Minimum & 1st Quantile & Mean & Median & 3rd Quantile  & Maximum \\
    \midrule
    Issue Age & 0  &  26 & 37 & 39   & 51    & 90 \\
    Face Amount (in \$) & 0   & 25,000 & 314,902 & 100,000 & 250,000 & 12,000,000 \\
    Duration (in years) & 0.3 & 2.5 &5.6  & 4.5 & 7.5 & 23.8 \\
    \bottomrule
    \end{tabular}
    }
  \label{tab:data summary_part2}
\end{table}

\subsection{Numerical results}

We calibrated the binary regression models under the six different link functions as earlier described. We estimated the parameters using the training dataset; the test dataset has been used for validation. We present the results separately for the standard link functions (logit, probit, cloglog) and the skewed link functions (standard GEV, skewed Weibull, and Fre\'chet). We will call the standard link models as frequentist models while the skewed link models as Bayesian models. This differentiation has to do with reference to the approach used to estimate the parameters. Note that for the categorical variables, the reference level should be clear from the results presented. For example, in the case of Plan type, the category COLI has been used; in the case of Line of business (LOB), the category EM has been used. 

Table \ref{tab:reduced_freq} provides the parameter estimates for the reduced frequentist models. First, we fit a full model with all the predicted variables included in the model, then the reduced model is deployed with insignificant variables excluded in the model. Interestingly, in terms of model comparison statistics, the probit model has the worst log likelihood, KS, and MAE, but it has the best BIC statistic. As the probit model rejects many predictor variables, its low BIC appears to intuitively make sense. The probit model would be appealing to those seeking for the simplest possible model, and yet still considered to be competitive. The probit model chooses only IssueAge and Duration for significant variables, and the parameter estimates for this model intuitively make sense. The probability of death increases with IssueAge and also slightly increases with policy duration.

\begin{table}[!htbp]
  \centering
  \caption{Parameter estimates for the reduced frequentist models}
  \scalebox{0.9}{
    \begin{tabular}{l|cc|cc|cc}
     \multicolumn{1}{c}{} & \multicolumn{2}{c}{logit} & \multicolumn{2}{c}{probit} & \multicolumn{2}{c}{cloglog} \\
    \hline
    Parameter & Estimate  & (s.e.) & Estimate  & (s.e.) & Estimate  & (s.e.) \\
    \hline
    Intercept &  -8.1418 & (0.8751) & -4.8315 & (0.1630) & -10.8632 & (0.5227)\\
    GenderMale & 0.4493 & (0.1473)  &       &       & 0.4922 & (0.1434)\\
    IssueAge & 7.5972 & (0.5803) & 3.2135 & (0.2098) & 8.1890 & (0.5719) \\
    SmokerStatusS & 0.6936 & (0.1940) &       &       &  0.5500 & (0.1834) \\
    SmokerStatusUni & -0.1080 & (0.2607) &       &       & -0.2090 & (0.2545) \\
    Duration & 0.0767 & (0.0074) & 0.0423 & (0.0022) & 0.0937 & (0.0066) \\
    FaceAmount & -3.0868 & (1.0348)  &       &       &       &       \\
    Plan\_typeUL\_LTC & -0.3088 & (0.3109)  &       &       & -0.4393 & (0.3144)\\
    Plan\_typeTERM & -0.8083 & (0.3736) &       &       & -1.1180 & (0.4024) \\
    Plan\_typeUL\_NSG & -0.1871 & (0.2199) &       &       & -0.0905 & (0.2077) \\
    Plan\_typeUL\_SG & -1.4596 & (1.0371) &       &       & -0.0670 & (0.4984) \\
    Plan\_typeVUL & -0.7097 & (0.5478)  &       &       & -0.3484 & (0.4577) \\
          &       &       &       &       &       &  \\
    -log Lik & \multicolumn{2}{c|}{1044.8173} & \multicolumn{2}{c|}{1068.5629} & \multicolumn{2}{c}{1039.393} \\
    BIC   & \multicolumn{2}{c|}{2207.3802} & \multicolumn{2}{c|}{2166.5623} & \multicolumn{2}{c}{2186.719} \\
    KS    & \multicolumn{2}{c|}{0.9994} & \multicolumn{2}{c|}{0.9999} & \multicolumn{2}{c}{0.9992} \\
    MAE   & \multicolumn{2}{c|}{0.0251} & \multicolumn{2}{c|}{0.0254} & \multicolumn{2}{c}{0.0253} \\
    \hline
    \end{tabular}}
  \label{tab:reduced_freq}
\end{table}

Table \ref{tab:red_bayes} presents the parameter estimates for the reduced Bayesian models. As stated before, one statistic used to compare Bayesian models is the DIC, which describes the trade-off between quality of the model fit to the data and the complexity of the model. The model that gives the smaller DIC is generally preferred as it is better supported by the observed data. Among all three skewed link models, the skewed Weibull gives the best DIC value of 2126.3532. While this model has the worst KS statistic, it still outperforms the other two models in terms of loglikelihood and MAE values.
\begin{table}[!htbp]
  \centering
  \caption{Parameter estimates for the reduced Bayesian models}
  \scalebox{0.9}{
    \begin{tabular}{l|c|c|c}
      \multicolumn{1}{c}{}  & \multicolumn{1}{c}{Standard GEV} & \multicolumn{1}{c}{Skewed Weibull} & \multicolumn{1}{c}{Fre\'chet} \\
    \hline
    Parameter & Estimate & Estimate & Estimate \\
      & 95\% HPD & 95\% HPD & 95\% HPD \\
    \hline
    Intercept & -5.3912 & -5.2117 & -4.7430 \\
     & (-7.9279, -3.9688) & (-7.6501, -3.8083) & (-8.5038, -3.2466) \\
    IssueAge & 2.0595 & 4.3174 & 1.8652 \\
     & (1.0426,  3.4020) & (2.8483,  6.9725) & (0.9955,  3.2125) \\
    SmokerStatusS & 0.4927 & 0.3434 & 0.3651 \\
     & (0.1658, 0.9199) & (0.0792, 0.6566) & (0.0749, 0.7524) \\
    SmokerStatusUni & 0.2028 & -0.0226 & 0.0955 \\
     & (-0.1612, 0.6155) & (-0.3849, 0.2493) & (-0.2651, 0.4372) \\
    Duration &       &       & 0.9030  \\
     &       &       & (0.5776,  1.4798) \\
    LOBISL & 2.2296 &   & 1.9686 \\
     & (0.8777, 4.0847) &    & (0.6643, 4.2602) \\
    LOBNoInformation & 2.0458 &   & 1.6797 \\
     & (0.7201, 3.6112) &    & (0.6770, 4.4102) \\
    LOBTRAD & 1.6730 &   & 1.4174 \\
     & (0.3112, 3.2024) &    & (0.1673, 3.8592) \\
    Plan type UL\_LTC &  -1.5931  & -0.2598 &  -1.6166 \\
     & (-2.5892, -0.8178) & (-0.6591, 0.0973) & (-2.6182, -0.8015) \\
    Plan type TERM & -1.8061 & -0.5262 & -2.8171 \\
     & (-3.1911, -0.9376) & (-1.1886, -0.0119) & (-5.0451, -1.2020) \\
    Plan type UL NSG & -1.2067 & -0.2039 & -1.2311 \\
     & (-1.9193, -0.6662) & (-0.4883, 0.0602) & (-1.9962, -0.6076) \\
    Plan type UL SG & -2.1475 & -1.4483 & -2.4779 \\
     & (-3.7416, -0.9047) & (-3.6883, -0.0552) & (-4.4446, -1.0303) \\
    Plan type VUL & -1.8975 & -0.5691 & -2.4099 \\
     & (-3.2817, -0.9542) & (-1.4931, 0.0668) & (-4.1592, -1.0580) \\
        &       &       &       \\
    shape parameter & -0.1262 & 1.3185 & 3.8392 \\
     & (-0.2498, 0.0266) & (1.0112, 1.6081) & (2.7269, 5.0725) \\
        &       &       &       \\
    DIC   & 2338.5129 & 2126.3532 & 2348.6408 \\
    -log Lik & 1153.3895 & 1055.0456 & 1165.1385 \\
    KS    & 0.9990 & 0.9997 & 0.9980 \\
    MAE   & 0.0263 & 0.0251 & 0.0262 \\
    \hline
    \end{tabular}}
  \label{tab:red_bayes}
\end{table}

Interestingly, as in the frequentist models, the skewed Weibull link model also is the simplest model among all three skewed link models as it rejected more predictors than any other. In particular, both policy duration and LOB are considered not significant predictors under the skewed Weibull link model. We can roughly interpret the estimated coefficients of the selected predictor variables for the skewed link model, and it appears to be quite intuitive. First, the positive sign for the IssueAge indicates a greater likelihood of death for older issue ages. Second, the positive sign for SmokerStatusS indicates a worse mortality for smokers than for non-smokers. The negative sign for SmokerStatusUni might be counterintuitive, however, this is more challenging to interpret as this smoking status is about the absence of smoking habits. Finally, all the signs for the Plan types are negative, which indicates that Plan type COLI has the best mortality among all plan types. Because it has the largest negative sign, Plan type UL SG has the worse mortality among all types. Notice further that for all Bayesian models, the FaceAmount was not considered a significant predictor of death. There are two possible explanations to this. First, the effect is depleted when we transform and normalize the variable. Second, the amount of death benefit can be directly linked to the degree of underwriting adopted prior to issue. In which case, the mortality selection effect can be explained by the policy duration.

If we wish to compare the frequentist models to the Bayesian models, we could examine the common model statistics such as the loglikehood, KS, and MAE values. For all models, the KS and MAE statistics are relatively comparable. However, in terms of loglikelihood, the Bayesian models outperform all the frequentist models. In particular, the skewed Weibull link model has the smallest value for all of them.

\begin{table}[htbp]
  \centering
  \caption{Aggregated amount of death benefits and death counts for the various link models}
    \begin{tabular}{l|rr|lr}
    \multicolumn{1}{c}{} & \multicolumn{2}{c}{Amount of death benefits} & \multicolumn{2}{c}{Death counts} \\
    \hline
           &    &    &    & \\
    Actual amount & \$\,33,992,610 &       & Actual count & 247 \\
           &    &    &    & \\
          & Estimated & Error rate &       & Estimated \\
     logit & \$\,28,653,690 & 15.7\% &  logit & 19 \\
     probit & \$\,42,883,180 & 26.2\% &  probit & 17 \\
     cloglog & \$\,37,309,376 & 9.8\% &  cloglog & 19 \\
          &    &    &    & \\
     Standard GEV & \$\,43,623,689 & 28.3\% &  Standard GEV & 252 \\
     Skewed Weibull & \$\,36,338,219 & 6.9\% &  Skewed Weibull & 248 \\
     Fre\'chet & \$\,39,838,452 & 17.2\% & Fre\'chet & 250 \\
     \hline
    \end{tabular}
  \label{tab:agg_benefit}
\end{table}

\begin{figure}[!htbp]
\centering
   \includegraphics[width=0.8\textwidth]{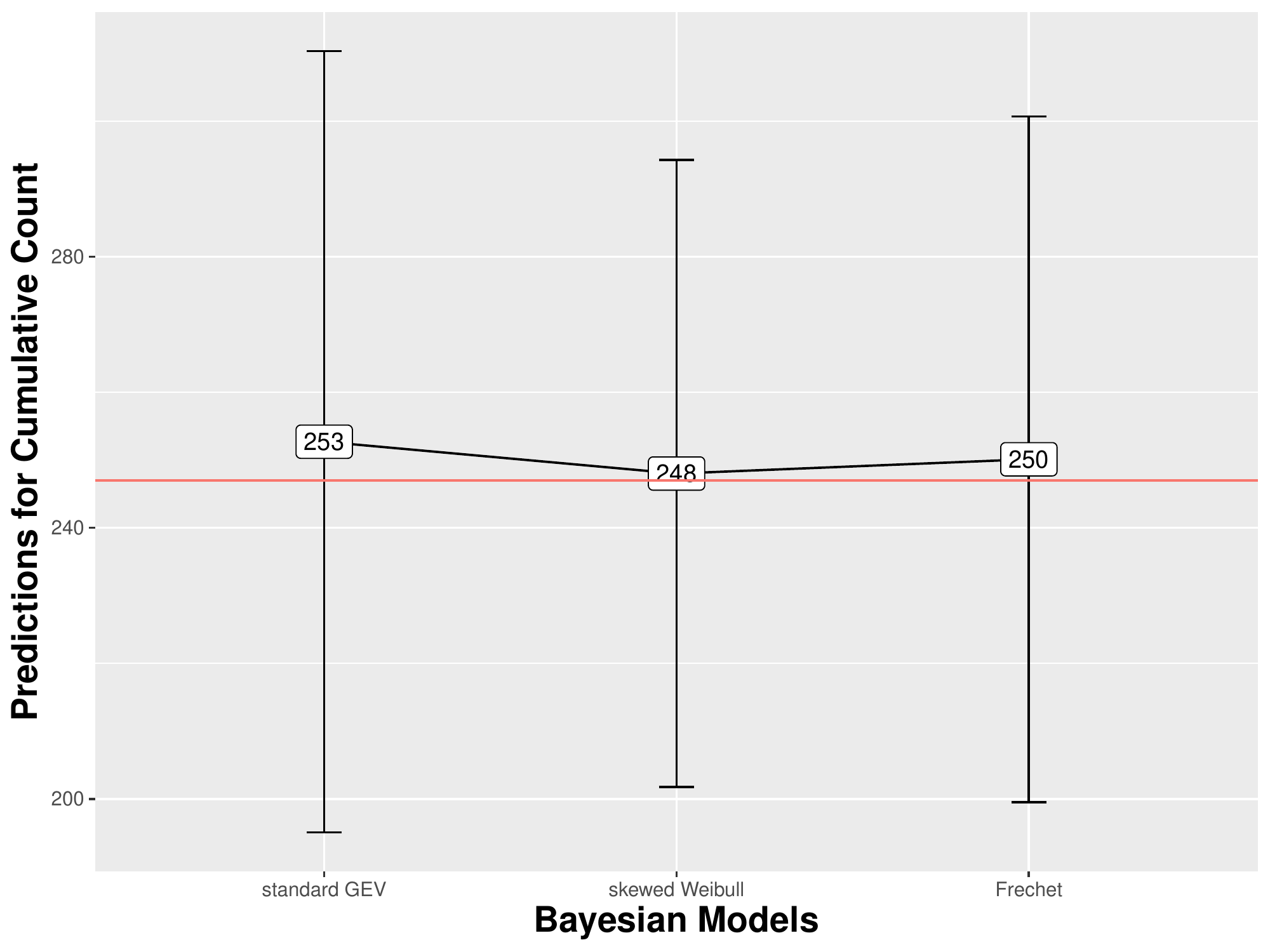}
   \includegraphics[width=0.8\textwidth]{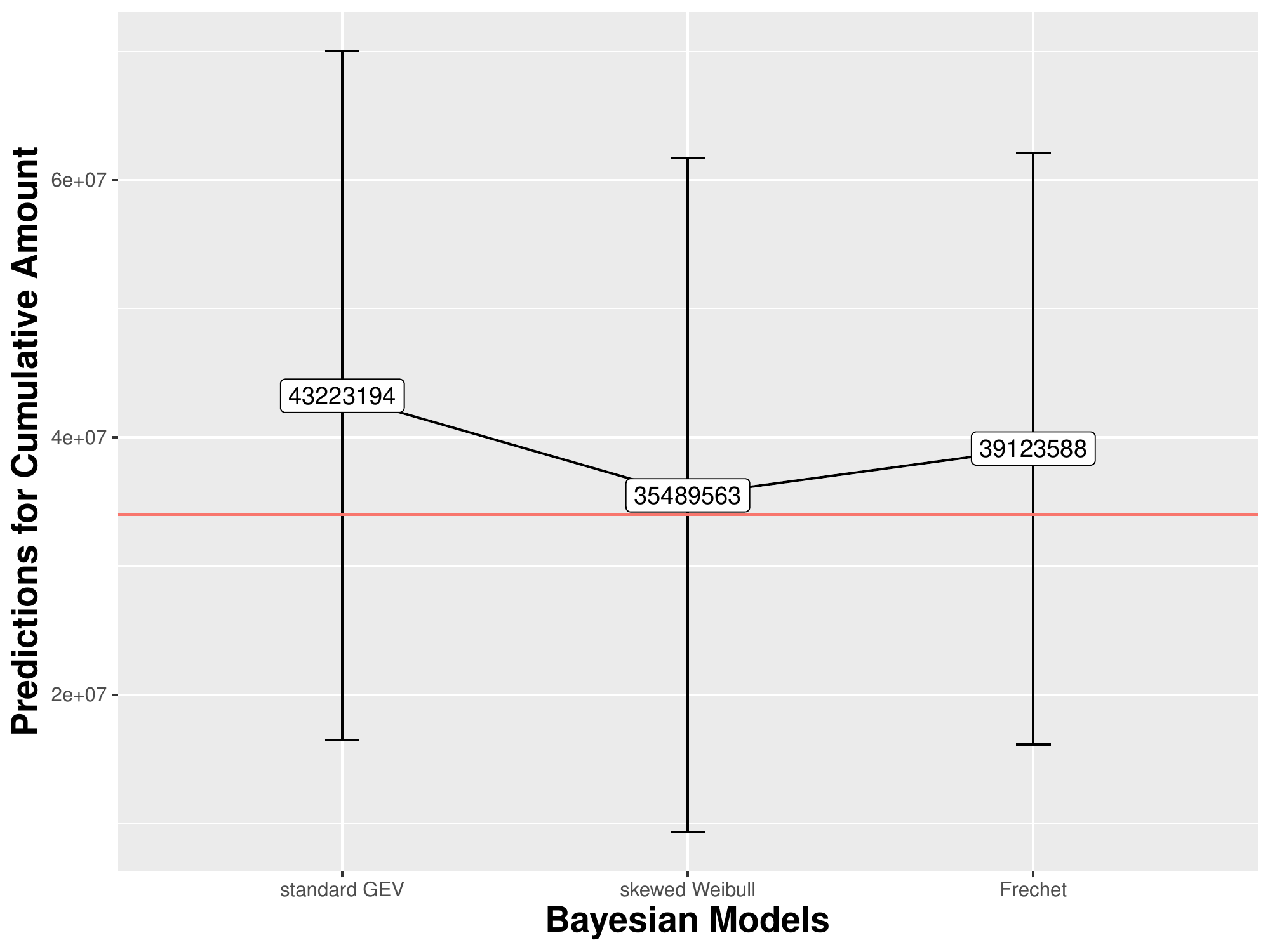}
  \caption{Mean and confidence intervals of the predictions from the Bayesian models for cumulative count and amount}
  \label{fig:pred}
\end{figure}

To further investigate the predictive power of the various models, we compare the prediction of death counts together with the corresponding death benefit amount. Table \ref{tab:agg_benefit} presents the aggregated death benefits and counts for the various link models using the test dataset. Accordingly, our test dataset has a total death count of 247, out of 18,254 observations; the corresponding benefit amount sums to approximately 34 million dollars. In terms of death counts, as we expected, the frequentist models could hardly come close to the true actual death counts; all three models underestimate the number of deaths. On the other hand, the skewed link models have predictions that are generally close to the actual death counts as shown in the table. In particular, the skewed Weibull, not surprisingly, marginally outperforms the other two models in terms of death counts, but is far superior in terms of predicting the corresponding death benefits. A visualization of the superiority of the skewed Weibull link model in terms of prediction is presented in Figure \ref{fig:pred}.

\section{Concluding remarks} \label{sec:conclude}
\markboth{Concluding remarks}{}

This article was driven by a purpose of developing a mortality investigation for tracking and monitoring death claims experience of a portfolio of life insurance contracts over a specified period. In this case, our primary interest is in developing a suitable binary regression model for predicting mortality, given a set of available covariate information. Our period of investigation is primarily short term, in this case, one year. Such is typical in practice where insurers need to predict the level of mortality for the following year given characteristics of their portfolio at the beginning of the year. However, this is particularly a challenging task as the observed mortality for a cohort of policies is expected to be a rare event.

This paper presents a methodology to handle extremely imbalanced binary outcome. In particular, we focus on the latent variable interpretation of the binary regression model and examine alternative functions that link the mean of the binary outcome to the predictor variables. We find that suitable link functions are those that have the flexibility to handle skewness, and we find three possible skewed link models: standard GEV, skewed Weibull, and Fre\'chet. We performed simulation studies and calibrated models using empirical data observed from a portfolio of life insurance contracts. Based on the model estimates and predictions, we find that all three models generally outperform frequentist models with standard link functions that include logit, probit, and cloglog. We also presented a Bayesian approach to the estimation of the skewed link models. As pointed out in this paper, the latent variable interpretation connects the various link models but it also has computational advantages especially when Bayesian inference is used for estimation such as the use of Metropolis-Hastings algorithm, an MCMC method.

The models and theoretical framework presented here can be extended to mortality tracking and monitoring on a more frequent basis than annually, for instance quarterly. This is because insurance companies publish quarterly financials to rating analysts and death benefits are a significant portion of an insurer's financials. The ability to understand and project death benefits on a quarterly basis is important to explain fluctuations in financials driven by mortality claims. However, when doing more frequent mortality tracking and monitoring than annual, it is important to distinguish between non-significant fluctuations in mortality and mortality fluctuations that represent a longer term trend. It would be interesting to explore this issue using the models we have developed in this paper. In addition, there is also the promise of using neural networks to handle imbalanced binary outcomes, and such would be an interesting future direction of research. See \cite{dreiseitl2002ann} and \cite{wang2016dnn}.

\newpage

\section*{Appendix} \label{appendix}
\markboth{Appendix}{}

The following useful results provide detailed proof that in the case of the \textit{Fre\'chet} link model that the resulting posterior distribution is proper. 
\begin{prop} \label{prop1}
Let $\boldsymbol{x}$ be the design matrix with full rank and $\boldsymbol{x}^{*}$ be the matrix with the $i$th row of $\epsilon_i \boldsymbol{x}_i'$, where $\epsilon_i= I(y_i=1) - I(y_i=0)$, for $i=1, 2, \ldots, n$. Then there exists a positive vector $\boldsymbol{a}=(a_1, a_2, \ldots, a_n)'$ with $a_i>0$ such that $\boldsymbol{a}' \boldsymbol{x}^{*} =0$. Under the prior given by $\pi(\alpha)\propto 1/\alpha^c$ with $c>1$, the posterior distribution stated in (\ref{eq:jointFR}) is proper. 
\end{prop}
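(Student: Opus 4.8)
\textit{Sketch of a proof.} The plan is to verify propriety directly, by showing that the integral of the right-hand side of (\ref{eq:jointFR}) over $(\boldsymbol{\beta},\alpha)\in\mathbb{R}^{p+1}\times(0,\infty)$ is finite, where $\pi(\boldsymbol{\beta})\propto 1$ and $\pi(\alpha)\propto\alpha^{-c}$. The argument would follow the template used by \cite{wang2010gev} for the GEV link and \cite{caron2018entropy} for the Weibull link: (i) dominate each observation's likelihood contribution by an elementary function of the linear predictor; (ii) use the full-rank assumption to change variables so that the $\boldsymbol{\beta}$-integral nearly factorizes; (iii) use the positive vector $\boldsymbol{a}$ with $\boldsymbol{a}'\boldsymbol{x}^{*}=\boldsymbol{0}$ to control the factors discarded in step (ii); and (iv) carry out the remaining integration over $\alpha$.

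For step (i), I would use the elementary inequalities $1-e^{-s}\le\min\{1,s\}$ and $e^{-s}\le\min\{1,\,m!\,s^{-m}\}$ for $s>0$, with $m$ a positive integer to be chosen at the end. Writing $v_i=-\boldsymbol{x}_i'\boldsymbol{\beta}$, the death contributions ($y_i=1$) are bounded by $\min\{1,v_i^{-\alpha}\}$ when $v_i>0$ and by $1$ otherwise, while the survival contributions ($y_i=0$) are bounded by $\min\{1,\,m!\,v_i^{m\alpha}\}$, which vanishes unless $v_i>0$; hence the integrand is supported on the polyhedral cone $\mathcal{C}=\{\boldsymbol{\beta}:\boldsymbol{x}_i'\boldsymbol{\beta}<0\text{ whenever }y_i=0\}$. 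The survival bounds force faster-than-polynomial vanishing of the integrand near $\partial\mathcal{C}$, and the death bounds give polynomial decay as the associated $v_i\to\infty$.

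For step (ii), since $\boldsymbol{x}$ has full rank $p+1$, I would pick indices $i_0,\dots,i_p$ for which $\epsilon_{i_0}\boldsymbol{x}_{i_0},\dots,\epsilon_{i_p}\boldsymbol{x}_{i_p}$ are linearly independent and pass to the coordinates $\eta_j=\epsilon_{i_j}\boldsymbol{x}_{i_j}'\boldsymbol{\beta}$ ($j=0,\dots,p$), a linear change of variables with constant Jacobian, so that every remaining $\epsilon_i\boldsymbol{x}_i'\boldsymbol{\beta}$ is a fixed linear form in $\boldsymbol{\eta}$. For step (iii), the hypothesis $\boldsymbol{a}'\boldsymbol{x}^{*}=\boldsymbol{0}$ with $\boldsymbol{a}>\boldsymbol{0}$ means $\sum_i a_i\epsilon_i\boldsymbol{x}_i'\boldsymbol{\beta}\equiv 0$, i.e.\ $\sum_{y_i=0}a_i v_i=\sum_{y_i=1}a_i v_i$ on $\mathcal{C}$; the left-hand side is a positive combination of strictly positive quantities, so some death index must have $v_i$ at least a fixed positive multiple of $\sum_{y_i=0}v_i$, turning its death bound into decay in the radial direction of $\mathcal{C}$. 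Combining the near-boundary vanishing of the survival terms with this radial decay --- accumulating decay over several death indices if one does not suffice --- should bound the $\boldsymbol{\beta}$-integral by an explicit finite quantity depending on $\alpha$ and $m$; in particular the non-separation hypothesis is exactly what prevents the integrand from remaining bounded away from zero along any ray of $\mathcal{C}$.

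Step (iv) is then to integrate this bound against $\pi(\alpha)\propto\alpha^{-c}$ over $(0,\infty)$, and I expect the interplay between steps (iii) and (iv) to be the main obstacle. Two regimes of $\alpha$ must be handled: as $\alpha\to\infty$ the Fr\'echet c.d.f.\ tends to an indicator, and as $\alpha\to 0^{+}$ it tends to the constant $e^{-1}$, so the $\boldsymbol{\beta}$-integral has to be shown to decay in $\alpha$ quickly enough near $0$ to dominate the locally non-integrable factor $\alpha^{-c}$; this is where the integer $m$, the numbers of deaths and survivals in the sample, and the constant $c>1$ must be balanced, much as the support restriction on the shape parameter is used in the Weibull case. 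The change of variables and the elementary inequalities in steps (i)--(ii) are routine; the delicate point is to choose $m$ and track the constants so that the resulting bound on the $\boldsymbol{\beta}$-integral is simultaneously integrable in $\boldsymbol{\beta}$ for each fixed $\alpha$ and integrable in $\alpha$ over the whole half-line.
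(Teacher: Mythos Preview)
Your route differs substantially from the paper's. Rather than bounding individual likelihood factors by elementary functions of the linear predictor and then carrying out a hands-on change of variables, the paper exploits the latent-variable representation directly: it writes each Bernoulli factor as an expectation $\E[I(\epsilon_i u_i\ge -\epsilon_i\boldsymbol{x}_i'\boldsymbol{\beta})]$ with $u_i$ i.i.d.\ Fr\'echet$(\alpha)$, applies Fubini to swap the $\boldsymbol{\beta}$-integral inside the expectation, and bounds the resulting polytope volume $\int I(\boldsymbol{x}^{*}\boldsymbol{\beta}<\boldsymbol{u}^{*})\,d\boldsymbol{\beta}$ by $K\|\boldsymbol{u}^{*}\|^{k}$ via Lemma~4.1 of \cite{chen2001propriety}. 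That lemma is exactly where the full-rank assumption and the positive-combination hypothesis $\boldsymbol{a}'\boldsymbol{x}^{*}=\boldsymbol{0}$ enter, so it packages your steps~(ii)--(iii) into a single citation. Propriety then reduces to a Fr\'echet moment bound $\E|u|^{k}<\infty$ together with integrability of $\alpha^{-c}$ over the $\alpha$-range. What your approach buys is self-containment (no external lemma) and an explicit picture of why non-separation matters; what the paper's buys is brevity and an argument that transfers verbatim from the probit/GEV/Weibull cases.

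On your step~(iv): the paper does \emph{not} integrate over $\alpha\in(0,\infty)$; its displayed integrals run over $\alpha\in[1,\infty)$, and the Fr\'echet moment condition it invokes is stated for $\alpha>1$. Your instinct that the regime $\alpha\to 0^{+}$ is the obstacle is correct and is in fact fatal there: as $\alpha\downarrow 0$ the Fr\'echet c.d.f.\ tends to the constant $e^{-1}$, the death-factor decay $v^{-\alpha}$ becomes arbitrarily slow, and the $\boldsymbol{\beta}$-integral over the cone $\mathcal{C}$ diverges once the aggregate decay exponent from the death indices drops below $p+1$; the prior $\alpha^{-c}$, being itself non-integrable at $0$, cannot compensate. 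So do not attempt to push your bounds through a neighbourhood of $0$ --- restrict the prior support to $\alpha>1$, paralleling the restriction $\gamma>1$ in the Weibull case, and then either your elementary programme or the paper's latent-variable argument goes through.
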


\begin{proof}
Let $u_1, u_2, \ldots, u_n$ be independent random variables with common Fre\'chet distribution with $\mu=0, \sigma = 1$, and shape parameter $\alpha$. For $0 < k < 1$, it can be shown that $\E(|u|^k) = \Gamma(1-\frac{k}{\alpha}) < \infty$ for $\alpha > 1$. Observing that  $1-F(-x)=\E(I(u>-x))$ and $F(-x)=\E(I(-u \leq x))$, where $I(\cdot)$ is the indicator function. Then

\begin{equation}
\left ( 1-F(-\boldsymbol{x_i'\beta})\right)^{y_i} \left ( F(-\boldsymbol{x_i'\beta})\right)^{1-y_i} \leq \E \left [I \left \{ \epsilon_i u_i \geq \epsilon_i (-\boldsymbol{x_i'\beta}))\right  \}\right ]
\end{equation}
and
\begin{equation}
\left ( 1-F(-\boldsymbol{x_i'\beta})\right)^{y_i} \left ( F(-\boldsymbol{x_i'\beta})\right)^{1-y_i} \geq \E \left [I \left \{ \epsilon_i u_i > \epsilon_i (-\boldsymbol{x_i'\beta}))\right  \}\right ]
\end{equation}
Let $\boldsymbol{u}^* = (\epsilon_1 u_1, \cdots, \epsilon_n u_n)$. Using Fubini's Theorem, we obtain 
\begin{equation}
\begin{aligned}
& \int_{1}^{\infty} \int_{R_k} L(\boldsymbol{\beta}, \alpha|\boldsymbol{y}, \boldsymbol{x}) \frac{1}{\alpha^c} d\boldsymbol{\beta} d\alpha \\
 = & \int_{1}^{\infty} \frac{1}{\alpha^c} \int_{R_n}  \E \left[ \int_{R_k} I(-\epsilon_i \boldsymbol{x_i'\beta} < \epsilon_i u_i, 1\leq i \leq n) d \boldsymbol{\beta}  \right] dF(\boldsymbol{u}) d\alpha \\
 =& \int_{1}^{\infty} \frac{1}{\alpha^c} \int_{R_n}\E \left[\int_{R_k} I(\boldsymbol{x}^* \boldsymbol{\beta} <  \boldsymbol{u}^* ) d \boldsymbol{\beta}  \right]dF(\boldsymbol{u}) d\alpha 
\end{aligned}
\end{equation}
Directly following from Lemma 4.1 of \cite{chen2001propriety}, and under the condition that $\E(|u|^k) < \infty$ and $\displaystyle \int_1^{\infty}\frac{1}{\alpha^c} d \alpha < \infty$ for $c>1$, there exists a constant $K$ such that
\begin{equation}
\int_{R_k} I(\boldsymbol{x}^* \boldsymbol{\beta} <  \boldsymbol{u}^* ) d \boldsymbol{\beta} \leq K \left \Vert \boldsymbol{u}^* \right \Vert^k 
\Longrightarrow  \int_{1}^{\infty} \int_{R_k} L(\boldsymbol{\beta}, \alpha|\boldsymbol{y}, \boldsymbol{x}) \frac{1}{\alpha^c} d\boldsymbol{\beta} d\alpha < \infty.
\end{equation}
\end{proof}

\newpage

\bibliographystyle{apalike}
\bibliography{binimb}

\end{document}